\newtheorem{theorem}{Theorem}
\newtheorem{corollary}{Corollary}
\newtheorem{proposition}{Proposition}
\def\BState{\State\hskip-\ALG@thistlm}
\def\RR{{\mathbb R}}
\def\K{{\mathcal K}}
\def\M{{\mathcal M}}
\def\I{{\mathcal I}}
\DeclareMathOperator{\argmax}{argmax}
\DeclareMathOperator{\OPT}{OPT}
\DeclareMathOperator{\alg}{ALG}
\DeclareMathOperator{\ident}{\bf I}
\DeclareMathOperator{\SigmaB}{\bf \Sigma}
\DeclareMathOperator{\muB}{\bf \mu}
\DeclareMathOperator{\ub}{UB}
\DeclareMathOperator{\lb}{LB}
\renewcommand{\algorithmicrequire}{\textbf{Input:}}
\renewcommand{\algorithmicensure}{\textbf{Output:}}
\title{Efficient algorithms for robust submodular maximization under matroid constraints}
\author{Sebastian Pokutta\thanks{Georgia Institute of Technology,
    Atlanta. Email: sebastian.pokutta@isye.gatech.edu} \quad Mohit Singh\thanks{Georgia Institute of Technology, Atlanta. Email: mohitsinghr@gmail.com} \quad Alfredo Torrico\thanks{Georgia Institute of Technology, Atlanta. Email: atorrico3@gatech.edu}}
\date{}
\begin{document}

\maketitle

\begin{abstract}
In this work, we consider robust submodular maximization with matroid constraints.  We give an efficient bi-criteria approximation algorithm that outputs a small family of feasible sets whose union has (nearly) optimal objective value. This algorithm theoretically performs less function calls than previous works at cost of adding more elements to the final solution. We also provide significant implementation improvements showing that our algorithm outperforms the algorithms in the existing literature. We finally assess the performance of our contributions in three real-world applications.
\end{abstract}

\section{Introduction}
\label{sec:introduction}


In the last decade, submodular maximization \cite{fisher1978analysis,calinescu2011maximizing} has caught significant attention due to its applicability in numerous real-world applications, particularly those related to constrained subset selection problems, such as clustering \cite{gomes_krause10}, variable selection in graphical models \cite{krause_guestrin05} and sensor placement \cite{krause_etal08a,krause_etal09,powers_etal16}. A simple example is the problem of selecting a subset of patients that is the most informative in a group of people with certain illness. Submodularity reflects the decreasing marginal gain in the information acquired via bio-medical observations when choosing more patients \cite{krause_guestrin05}.
Formally, a set function $g:2^V\to\RR_+$ is \emph{submodular} if and only if \atdelete{it satisfies the {\it diminishing returns property}. Namely, }for any $e\in V$ and $A\subseteq B\subseteq V\backslash\{e\}$, $g_A(e)\geq g_B(e)$, where $g_A(e):=g(A+e)-g(A)$ and $A+e:= A\cup\{e\}$. Also, we say that $g$ is \emph{monotone} if for any $A\subseteq B\subseteq V$, we have $g(A)\leq g(B)$.
However, there are a few reasons that motivated constrained robust submodular optimization \cite{krause2008robust,anari2017robust}. For instance, when running medical tests on patients, any malfunction in the procedure will lead to imprecise observations, which translates in unstable models. Thus, the goal is to obtain solutions that are robust to these perturbations. This can be achieved by optimizing against (the minimum of) several submodular functions.

Our main contributions are: (a) to design an efficient bi-criteria algorithm that provides theoretical guarantees for robust submodular maximization subject to matroid constraints, (b) to observe that the main computational bottleneck in previous approaches is to \emph{certify} near-optimality of the obtained solution and present significant implementation improvements  to \atdelete{obtain short certificates of near-optimality}\atedit{attack this issue} and also (c) to show that our procedure can be \atdelete{easily implemented and} efficiently executed in real-world applications. 
Our theoretical results build on previous works related to \atdelete{fast and }efficient algorithms for submodular optimization and its robust variant~\cite{nemhauser1978analysis,minoux_78,krause2008robust,badanidiyuru_vondrak14,anari2017robust}.

\subsection{Problem Formulation}
Consider a collection of non-negative, monotone, submodular functions \(f_i: 2^V\to\RR_+\) on the same ground set \(V=\{1,\ldots,n\}\) with \(i\in[k]:=\{1,\ldots,k\}\), and also a family of feasible (also called independent) sets \(\I \), which define a matroid \(\M = (V,\I) \). Our interest is to obtain a feasible set \(S\in \I\) that maximizes the minimum over all objectives, i.e., we want to solve
\begin{equation}\label{eq:problem_def} \max_{S\in \I} \min_{i\in[k]} f_i(S) \end{equation}

\cite{krause2008robust} prove that problem \eqref{eq:problem_def} is NP-hard to approximate to any polynomial factor when any $k$ is considered. This \atdelete{hardness result }motivates the necessity of bi-criteria solutions. Specifically, in this work we focus on relaxing the constraints in order to get better approximation factors. We address this by constructing a small family of feasible sets whose union has (nearly) optimal objective value. This ``solution'' may initially seem counterintuitive for general matroids, however for many cases of interest, this is just a generalization of what is done when relaxing a single cardinality constraint. To exemplify this, consider \atdelete{the case of }partition constraints: here we are given a partition \(\{P_1, \ldots, P_q\}\) of the ground set and the goal is to pick a subset that includes at most \(b_j\) elements from part \(P_j\) for each \(j\). Then, the union of $\ell$ feasible sets have at most \(\ell\cdot b_j\) elements in each part. Since the output set $S$ is possibly infeasible, we define the \emph{violation ratio} $\nu$ as the minimum number of feasible sets whose union is $S$. In \atdelete{case of partition constraints}\atedit{our example}, this is equivalent to $\nu = \max_{j\in[q]}\lceil|S\cap P_j|/ b_j\rceil$.

\subsection{Related Work}\label{sec:related_work}
There has been considerable work in robust submodular function maximization \cite{chekuri2010dependent,orlin_etal16,he_etal16,chen_etal16,chen2017robust,staib2017robust} and due to space limitations we will only be able to review work most closely related to ours. The initial model for robust submodular function maximization was introduced in \cite{krause2008selecting}. Later, \cite{krause2008robust} study the case when \( \I \) corresponds to cardinality constraints, and propose a (greedy-type) bi-criteria algorithm. \cite{powers2016constrained} considers the same robust problem with any matroid constraint but their relaxation approach is diferent. For structured combinatorial constraints, such as matroids or knapsack constraints, \cite{anari2017robust} propose an extended version of the standard greedy algorithm. 
However, in real-world applications the standard greedy algorithm is inefficient due to the number of function calls\atdelete{when the ground set $V$ is large}. Specifically, the extended greedy algorithm presented in \cite{anari2017robust} performs \(O(n r \ell)\) function evaluations, where \(r\) is the rank of the matroid and \(\ell \) is the number of rounds. Since \cite{fisher1978analysis}, there has been significant progress on reducing the number of evaluations, see e.g. \cite{minoux_78,badanidiyuru_vondrak14,mirzasoleiman_etal15} for vanilla version of submodular maximization. \atdelete{We build on these works to design efficient bi-criteria algorithms for problem \eqref{eq:problem_def}. }

\subsection{Our Results and Contributions}\label{sec:contributions}
We propose an extended version of the \emph{threshold greedy} algorithm introduced by \cite{badanidiyuru_vondrak14}, which produces a family of feasible sets by using less function calls at cost of a small error in the approximation factor. The procedure is formally presented in Algorithm \ref{alg:ext_th_greedy} and its theoretical guarantee is given in Proposition \ref{prop:ext_threshold_greedy}.

\begin{algorithm}[h]
\caption{Extended Threshold-Greedy}\label{alg:ext_th_greedy}
\begin{algorithmic}[1]

\renewcommand{\algorithmicrequire}{\textbf{Input:}}
\renewcommand{\algorithmicensure}{\textbf{Output:}}
\Require $\ell\geq 1$, ground set $V$ with $n:=|V|$, monotone submodular function $g:2^{V} \rightarrow \RR_+$, matroid $\M=(V,\I)$ and $\delta>0$.
\Ensure feasible sets $S_1,\ldots, S_\ell\in \I$.
\For {$\tau=1, \dots, \ell$}
\State $S_\tau\leftarrow\emptyset$
\State $d\leftarrow\max_{e\in V} g(\cup_{j=1}^{\tau-1} S_j + e)$
\For {$(w=d; w\geq \frac{\delta}{n}d; w\leftarrow (1-\delta)w)$}
\For {$e\in V\backslash S_\tau$}
	\If {$S_\tau+e \in \I$ and $g_{\cup_{j=1}^{\tau} S_j}(e)\geq w$}
	\State $S_\tau\leftarrow S_\tau + e$
	\EndIf
\EndFor
\EndFor
\EndFor
\end{algorithmic}
\end{algorithm}
\normalsize

\begin{proposition}\label{prop:ext_threshold_greedy}
Given $\ell\geq 1$, a monotone submodular function $g:2^V\rightarrow \RR_+$ with $g(\emptyset)=0$, and parameter \(\delta >0\), Algorithm \ref{alg:ext_th_greedy} returns feasible sets $S_1,\ldots, S_\ell\in\I$ such that
\[g\left(\cup_{\tau=1}^\ell S_\tau\right)\geq \left(1-\left(\frac{1}{2-\delta}\right)^\ell \right)\cdot\max_{S\in \I} g(S).\]
This algorithm performs \(O( \frac{n\ell}{\delta}\log\frac{n}{\delta})\) function calls, \emph{independent} of the rank of the matroid.
\end{proposition}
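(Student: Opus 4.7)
The plan is to reduce Proposition~\ref{prop:ext_threshold_greedy} to a single-round threshold-greedy guarantee applied to a residual submodular function, and then telescope over the $\ell$ rounds. For $\tau \in \{0,1,\ldots,\ell\}$, let $T_\tau := \cup_{j=1}^\tau S_j$ (with $T_0 = \emptyset$) and define the residual
$$h_\tau(A) := g(T_{\tau-1}\cup A) - g(T_{\tau-1}),$$
which is monotone submodular with $h_\tau(\emptyset)=0$. Because the test $g_{T_{\tau-1}\cup S_\tau}(e)\geq w$ in the algorithm is identical to $h_\tau(S_\tau+e)-h_\tau(S_\tau)\geq w$, the inner loop at round $\tau$ is exactly the threshold-greedy procedure of \cite{badanidiyuru_vondrak14} applied to $h_\tau$ under the matroid $\M$.

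The main technical step is the single-round guarantee
$$h_\tau(S_\tau)\;\geq\;\frac{1-\delta}{2-\delta}\,\max_{A\in\I}h_\tau(A)\qquad\text{for every }\tau.$$
I would establish this by a matroid-exchange argument paired with the threshold discretization. Let $O^\star\in\argmax_{A\in\I}h_\tau(A)$. Using the matroid exchange property, one constructs an injection $\pi\colon O^\star\setminus S_\tau\to S_\tau\setminus O^\star$ compatible with independence, extended by the identity on $O^\star\cap S_\tau$. For each $o\in O^\star$, the key observation is that when its partner $\pi(o)$ was included at some threshold $w$, either (i)~$o$ was a feasible augmentation at that moment, in which case the update rule $w\leftarrow(1-\delta)w$ forces the $h_\tau$-marginal of $o$ at that instant to lie below $w/(1-\delta)$, hence at most $(1-\delta)^{-1}$ times the marginal of $\pi(o)$; or (ii)~$o$ was infeasible to insert or had marginal below the final threshold $\delta d/n$, contributing in aggregate at most an $O(\delta)$ fraction of $h_\tau(S_\tau)$ (using $|O^\star|\leq n$ and the fact that $d$ upper bounds $\max_{A\in\I}h_\tau(A)$ up to an $O(\delta)$ term). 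Summing the per-element inequalities with submodularity of $h_\tau$ and rearranging yields the stated $\tfrac{1-\delta}{2-\delta}$ factor. This matroid-exchange accounting is the principal obstacle but follows the blueprint of \cite{fisher1978analysis,badanidiyuru_vondrak14}.

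Once the single-round bound is in hand, the $\ell$-round inequality follows by telescoping. Fix an optimum $O\in\argmax_{S\in\I}g(S)$. By monotonicity, $\max_{A\in\I}h_\tau(A)\geq h_\tau(O)\geq g(O)-g(T_{\tau-1})$, so
$$g(T_\tau)-g(T_{\tau-1})\;=\;h_\tau(S_\tau)\;\geq\;\tfrac{1-\delta}{2-\delta}\bigl(g(O)-g(T_{\tau-1})\bigr),$$
equivalently $g(O)-g(T_\tau)\leq\tfrac{1}{2-\delta}\bigl(g(O)-g(T_{\tau-1})\bigr)$. Since $g(T_0)=0$, induction over $\tau=1,\ldots,\ell$ gives $g(O)-g(T_\ell)\leq\bigl(\tfrac{1}{2-\delta}\bigr)^\ell g(O)$, which is the claimed inequality.

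For the query count, in each of the $\ell$ rounds the threshold $w$ descends from $d$ to $\delta d/n$ by factor $(1-\delta)$, yielding $\lceil\log(n/\delta)/\log(1/(1-\delta))\rceil = O\!\bigl(\tfrac{1}{\delta}\log\tfrac{n}{\delta}\bigr)$ levels; each level sweeps $V$ once with $O(n)$ marginal evaluations, plus $n$ queries to initialise $d$. Summing over the $\ell$ rounds gives $O\!\bigl(\tfrac{n\ell}{\delta}\log\tfrac{n}{\delta}\bigr)$ queries overall, independent of the rank $r$ because the number of threshold levels does not depend on how many elements are ultimately added.
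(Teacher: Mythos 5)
Your proposal is correct and follows essentially the same route as the paper: the paper likewise invokes the single-round $\tfrac{1-\delta}{2-\delta}$ threshold-greedy guarantee (its Corollary~\ref{cor:threshold_greedy}, proved via the same matroid-exchange bijection you sketch), applies it to the residual function $g'(S)=g(S\cup T_{\tau-1})$, and closes by the identical induction $g(O)-g(T_\tau)\leq\tfrac{1}{2-\delta}\bigl(g(O)-g(T_{\tau-1})\bigr)$ and the same counting of threshold levels for the query bound. If anything, your handling of elements falling below the final threshold $\delta d/n$ is slightly more explicit than the paper's, but the argument is the same.
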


By using Algorithm \ref{alg:ext_th_greedy}, we are able to construct a bi-criteria algorithm for problem \eqref{eq:problem_def} as follows: in a outer loop we obtain an estimate $\gamma$ on the value of the optimal solution $\OPT:=  \max_{S\in \I} \min_{i\in[k]} f_i(S)$ via a binary search. Next, for each guess $\gamma$ we define a new submodular set function \atdelete{$g^\gamma:2^{V}\rightarrow \RR_+$ }as \(g^\gamma(S):= \frac{1}{k} \sum_{i\in[k]} \min\{f_i(S), \gamma\}.\) Finally, given $\delta>0$, we run Algorithm \ref{prop:ext_threshold_greedy} (corresponding to a inner loop) on $g^\gamma$ with \(\ell =\lceil \log \frac{2k}{\epsilon}/\log (2-\delta)\rceil \) to obtain a candidate solution. Depending on this result, we update the binary search on $\gamma$, and we iterate. We stop the binary search whenever we get a relative error of $1-\epsilon/2$, namely, $(1-\epsilon/2)\OPT\leq\gamma\leq\OPT$. With this procedure we can get (nearly) optimal objective value by using less function evaluations than \cite{anari2017robust} at cost of producing a slightly bigger family of feasible sets. This is formally stated in Theorem \ref{theorem:offline}. 

\begin{theorem}\label{theorem:offline}
For problem \eqref{eq:problem_def}, there is a polynomial time algorithm that returns a set $S^{\alg}$, such that for given $0<\epsilon,\delta<1$, for all $j\in[k]$ it holds
\[f_j(S^{\alg})\geq (1-\epsilon) \cdot \max_{S\in \I} \min_{i\in[k]} f_i(S),\]
where $S^{\alg}=S_1\cup \dots\cup S_\ell$ with $\ell = \lceil \log \frac{2k}{\epsilon}/\log (2-\delta)\rceil$, and $S_1,\dots,S_\ell$ are feasible.
\end{theorem}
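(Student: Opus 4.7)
The plan is to combine the outer binary search over a guess $\gamma$ for $\OPT$ with an application of Proposition \ref{prop:ext_threshold_greedy} to the capped, averaged objective $g^\gamma(S):=\frac{1}{k}\sum_{i\in[k]}\min\{f_i(S),\gamma\}$, and then ``uncap'' to recover a bound on each $f_j$.

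First I would verify the basic properties of $g^\gamma$: since each $f_i$ is non-negative, monotone and submodular and $\min\{\cdot,\gamma\}$ is a non-decreasing concave transformation, the truncation $\min\{f_i(\cdot),\gamma\}$ is monotone submodular, and non-negative linear combinations preserve this. Also $g^\gamma(\emptyset)=0$ and $g^\gamma(S)\leq\gamma$ for all $S$. The crucial observation is that if $\gamma\leq \OPT$ and $S^*$ is an optimal solution of \eqref{eq:problem_def}, then $f_i(S^*)\geq\OPT\geq\gamma$ for every $i$, so $g^\gamma(S^*)=\gamma$; hence $\max_{S\in\I} g^\gamma(S)=\gamma$ whenever $\gamma\leq\OPT$.

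Next, for a fixed $\gamma\leq\OPT$, I would run Algorithm \ref{alg:ext_th_greedy} on $g^\gamma$ with the stated $\ell=\lceil \log(2k/\epsilon)/\log(2-\delta)\rceil$. Proposition \ref{prop:ext_threshold_greedy} gives $g^\gamma(S^{\alg})\geq\bigl(1-(2-\delta)^{-\ell}\bigr)\gamma$, and the choice of $\ell$ yields $(2-\delta)^{-\ell}\leq \epsilon/(2k)$, so
\[
\sum_{i\in[k]}\min\{f_i(S^{\alg}),\gamma\}\;\geq\; k\gamma-\tfrac{\epsilon}{2}\gamma.
\]
Since each summand is at most $\gamma$, for any fixed $j$ the remaining $k-1$ terms contribute at most $(k-1)\gamma$, forcing $\min\{f_j(S^{\alg}),\gamma\}\geq \gamma-\tfrac{\epsilon}{2}\gamma=(1-\tfrac{\epsilon}{2})\gamma$, and hence $f_j(S^{\alg})\geq (1-\tfrac{\epsilon}{2})\gamma$ for every $j\in[k]$.

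It remains to locate a guess $\gamma$ with $\gamma\geq (1-\epsilon/2)\OPT$ by binary search. The test is: run Algorithm \ref{alg:ext_th_greedy} and accept $\gamma$ if the output satisfies $g^\gamma(S^{\alg})\geq (1-\epsilon/(2k))\gamma$; the analysis above certifies every accepted $\gamma$, while the preceding paragraph shows every $\gamma\leq \OPT$ is accepted. Binary searching on the interval $[\max_{e}f_{\min}(e),\,r\cdot\max_{e}f_{\max}(e)]$ (or any polynomial-range bracket of $\OPT$), with multiplicative step $(1-\epsilon/2)$, terminates in $O(\log_{1/(1-\epsilon/2)}(n/\epsilon))=O(\epsilon^{-1}\log(n/\epsilon))$ rounds, each requiring one run of Algorithm \ref{alg:ext_th_greedy}. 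Combining $\gamma\geq(1-\epsilon/2)\OPT$ with the per-coordinate bound yields $f_j(S^{\alg})\geq(1-\epsilon/2)^2\OPT\geq(1-\epsilon)\OPT$. Feasibility and cardinality of $\{S_1,\dots,S_\ell\}$ are inherited directly from Algorithm \ref{alg:ext_th_greedy}.

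The main technical step is the ``uncapping'' inequality that converts the aggregate bound on $g^\gamma$ into a uniform bound on each $f_j$; the rest is bookkeeping around the binary search range and showing that the acceptance test behaves monotonically enough for binary search (which follows from the structure above, since acceptance is guaranteed whenever $\gamma\leq\OPT$).
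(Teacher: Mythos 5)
Your proposal is correct and follows essentially the same route as the paper's proof: run Algorithm \ref{alg:ext_th_greedy} on the capped average $g^\gamma$ inside a binary search, invoke Proposition \ref{prop:ext_threshold_greedy} with the stated $\ell$, and convert the aggregate bound $g^\gamma(S^{\alg})\geq(1-\epsilon/(2k))\gamma$ into a per-function bound. You merely make explicit two steps the paper leaves implicit --- that $\max_{S\in\I}g^\gamma(S)=\gamma$ whenever $\gamma\leq\OPT$, and the averaging/``uncapping'' inequality that the paper dismisses as ``a contradiction argument.''
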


The proofs of these two theoretical results can be found in Appendix \ref{sec:proofs}. The extended threshold-greedy as well as other heuristics such as lazy evaluations~\cite{minoux_78} (see Appendix \ref{sec:other_improv}) improve the running time theoretically as well as on practical instances, see Section \ref{sec:comp-results}. Unfortunately, the main bottleneck remains obtaining a certificate of (near)-optimality or equivalently, a good upper bound on the optimum. We obtain that the optimum value is at most $\gamma$ whenever running the extended greedy algorithm on function $g^\gamma$ fails to return a solution of desired objective. Unfortunately, due to the desired accuracy in binary search and the number of steps in extended greedy, obtaining good upper bounds on the optimum is computationally prohibitive. 
We resolve this issue by implementing an early stopping rule in the bi-criteria algorithm. 
 When running Algorithm \ref{alg:ext_th_greedy} on  function \(g^\gamma\) (as explained above) we use the stronger guarantee given in Proposition~\ref{prop:ext_threshold_greedy}. When $\gamma$ is much larger than $\OPT$ and we fail to realize the guarantee in Proposition \ref{prop:ext_threshold_greedy}: if in iteration $\tau\in[\ell]$ we obtain a set $S_\tau$ such that $g(\cup_{t=1}^\tau S_t)< (1-1/(2-\delta)^\tau)\cdot\gamma$, then we stop and update the upper bound on the optimum to be $\gamma$. This allows us to stop the iteration much earlier since in many real instances $\tau$ is typically much smaller than $\ell$ when $\gamma$ is large. This leads to a drastic improvement in the number of function calls as well as CPU time. Indeed, without this improvement, the extended greedy algorithm of \cite{anari2017robust} even with lazy evaluations has a poor performance with a CPU time of more than 4 hours in small instances of $n=5,000$ elements as compared to few minutes after addition of this step. 



In Section \ref{sec:comp-results}, we assess the performance of these implementation improvements in two applications, showing empirically that the tested algorithms using these small changes significantly outperform the previous work. Also, we present a simple heuristic adapted from the \emph{stochastic greedy} algorithm introduced in \cite{mirzasoleiman_etal15}. We provide an extra experiment in Appendix \ref{sec:extra_experiment}.

\section{Modeling and Experimental Results}
\label{sec:comp-results}
To facilitate the interpretation of our theoretical results, we will consider partition constraints in all experiments: the ground set $V$ is partitioned in $q$ sets $\{P_1,\ldots,P_q\}$ and the family of feasible sets is $\I = \{S: \ |S\cap P_j| \leq b, \ \forall j\in[q]\}$, same budget $b$ for each part. We test four methods: (\texttt{prevE-G}) the extended greedy with no improvements \cite{anari2017robust}, and the rest with improvements, (\texttt{E-G}) the extended greedy \cite{anari2017robust}, (\texttt{E-ThG}) the extended threshold greedy (this work), and (\texttt{E-StochG}) a heuristic we called extended stochastic greedy. The last procedure is an extended version of the \emph{stochastic greedy} \cite{mirzasoleiman_etal15}, and adapted to partition constraints (see Appendix \ref{sec:E-StochG}). Finally, we consider $\ell = \lceil \log\frac{2k}{\epsilon}\rceil$ for \texttt{E-G} and \texttt{E-StochG}, and $\ell = \lceil \log \frac{2k}{\epsilon}/\log (2-\delta)\rceil$ for \texttt{E-ThG}. See Appendix \ref{sec:final_pseudo} for the final pseudo-code of the main algorithm.

After running the four algorithms, we save the solution $S^{\alg}$ with the largest violation ratio $\nu$, and denote by $\tau_{\max} := \left\lceil \nu\right\rceil$. Observe that $S^{\alg} = S_1\cup\ldots\cup S_{\tau_{\max}}$ where $S_\tau\in\I$ for all $\tau\in[\tau_{\max}]$. We consider two additional baseline algorithms (without binary search): Random Selection (\texttt{RS}) which outputs a set $\tilde{S} = \tilde{S}_1\cup\ldots\cup \tilde{S}_{\tau_{\max}}$ such that for each $\tau\in[\tau_{\max}]$: $\tilde{S}_\tau$ is feasible, constructed by selecting elements uniformly at random, and $|\tilde{S}_\tau\cap P_j| = |S_\tau\cap P_j|$ for each part $j\in[q]$. Secondly,  (\texttt{G-Avg}) we run  $\tau_{\max}$ times the lazy greedy algorithm on the average function $\frac1k\sum_{i\in[k]}f_i$ and considering constraints $\I_\tau = \{S: \ |S\cap P_j| \leq |S_\tau\cap P_j|, \ \forall j\in[q]\}$ for each iteration $\tau\in[\tau_{\max}]$.

\atdelete{\subsection{Modeling and Experimental Results}}

In all experiments we consider the following parameters: approximation $1-\epsilon = 0.99$, threshold $\delta = 0.1$, and sampling in \texttt{E-StochG} with $\epsilon' = 0.1$.  \atdelete{The number of parts $q$ and budget $b$ for constraints $|S\cap P_j|\leq b$ will depend on the experiment.} The composition of each part $P_j$ is always uniformly at random from $V$.
%
%
%

\paragraph{Non-parametric Learning.}
We follow the setup in \cite{mirzasoleiman_etal15}. Let $X_V$ be a set of random variables corresponding to bio-medical measurements, indexed by a ground set of patients $V$. We assume $X_V$ to be a Gaussian Process (GP), i.e., for every subset $S\subseteq V$, $X_S$ is distributed according to a multivariate normal distribution $\mathcal{N}(\muB_S,\SigmaB_{S,S})$, where $\muB_S = (\mu_{e})_{e\in S}$ and $\SigmaB_{S,S} = [\K_{e,e'}]_{e,e'\in S}$ are the prior mean vector and prior covariance matrix, respectively. The covariance matrix is given in terms of a positive definite kernel $\K$, e.g., a common choice in practice is the squared exponential kernel $\K_{e,e'} =\exp(-\|x_e-x_{e'}\|^2_2/h)$. Most efficient approaches for making predictions in GPs rely on choosing a small subset of data points. For instance, in the Informative Vector Machine (IVM) the goal is to obtain a subset $A$ such that maximizes the information gain, \(f(A) = \frac{1}{2}\log\text{det}(\ident + \sigma^{-2}\SigmaB_{A,A})\). 
In our experiment, we use the Parkinson Telemonitoring dataset \cite{tsanas_etal10} consisting of $n = 5,875$ patients with early-stage Parkinsons disease and the corresponding bio-medical voice measurements with 22 attributes (dimension of the observations). We normalize the vectors to zero mean and unit norm. With these measurements we computed the covariance matrix $\Sigma$ considering the squared exponential kernel with parameter $h=0.75$. 
 For  our robust criteria, we consider $k=20$ perturbed versions of the information gain defined with $\sigma^2=1$, i.e., problem \eqref{eq:problem_def} corresponds to \(\max_{A\in \I}\min_{i\in[20]}f(A) + \sum_{e\in A\cap \Lambda_i}\eta_e\), where \(f(A)=\frac{1}{2}\log\text{det}(\ident + \SigmaB_{AA})\), \(\Lambda_i \) is a random set of size 1,000 with different composition for each $i\in[20]$, and $\eta\sim[0,1]^V$ is a uniform error vector.

We made 20 random runs considering $q = 3$ parts and budget $b=5$. We report the results in Figures \ref{fig:experiments} (a)-(d). In the performance profiles (a) and (b), we observe that any of the three algorithms clearly outperform \texttt{prevE-G}, either in terms of running time (a) or function calls (b). With this, we show empirically that our implementation improvements help in the performance of the algorithm. We also note that \texttt{E-StochG} is likely to have the best performance. Box-plots for the function calls in Figure \ref{fig:experiments} (d)  confirm this fact, since \texttt{E-StochG} has the lowest median. In this figure, we do not present the results of \texttt{prevE-G} because of the difference in magnitude of the number of function calls. Finally, in (c) we present the objective values obtained in a single run, and we observe that the stopping rule is useful since the three tested algorithms find a good solution earlier (using fewer elements) outperforming \texttt{prevE-G} and the benchmarks, and at much less cost as we mentioned before.

\begin{figure*}[ht]
\begin{center}
\begin{tabular}{ccc}
\def\arraystretch{0.5}
\includegraphics[width=0.3\textwidth]{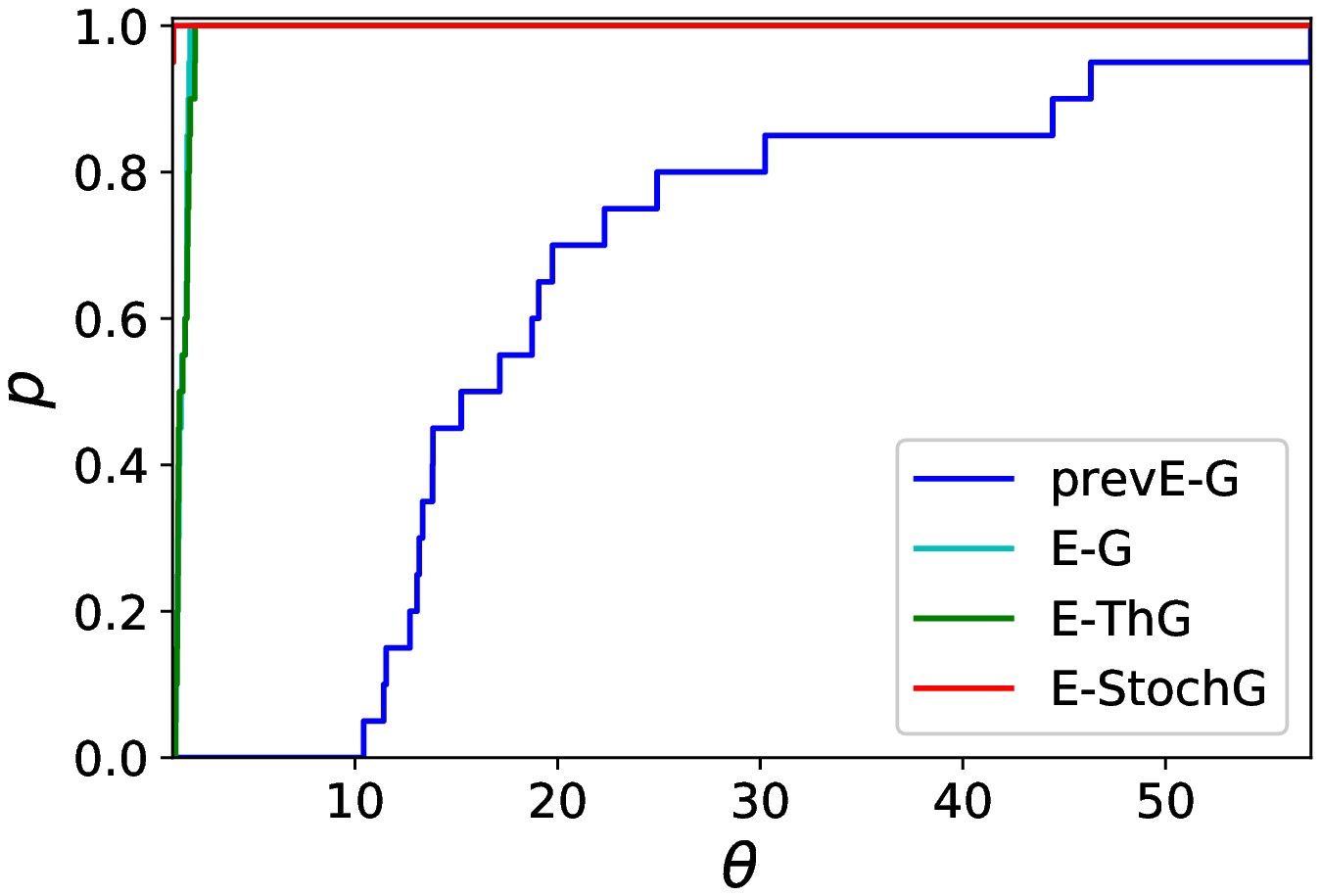} &
\includegraphics[width=0.3\textwidth]{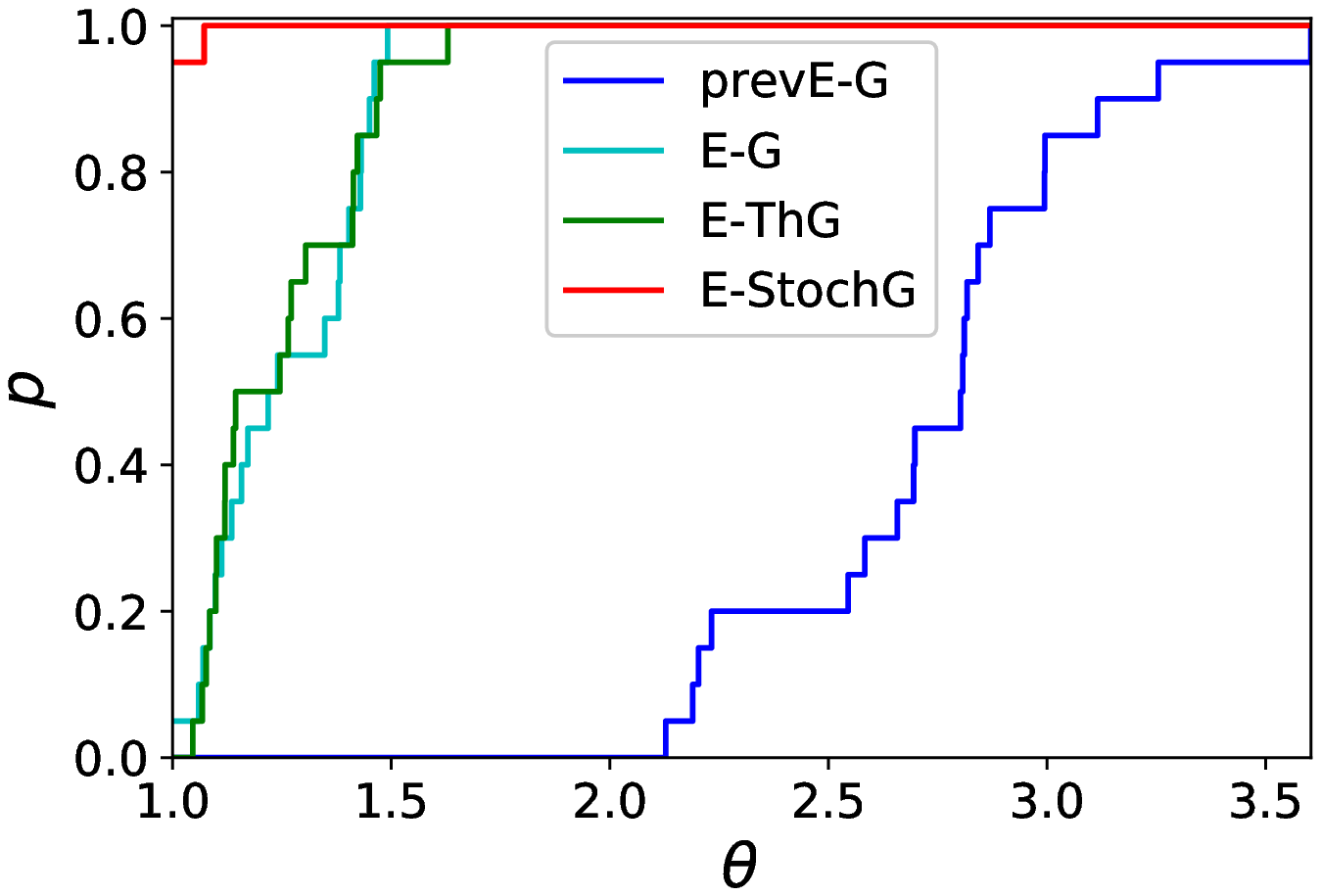} &
\includegraphics[width=0.3\textwidth]{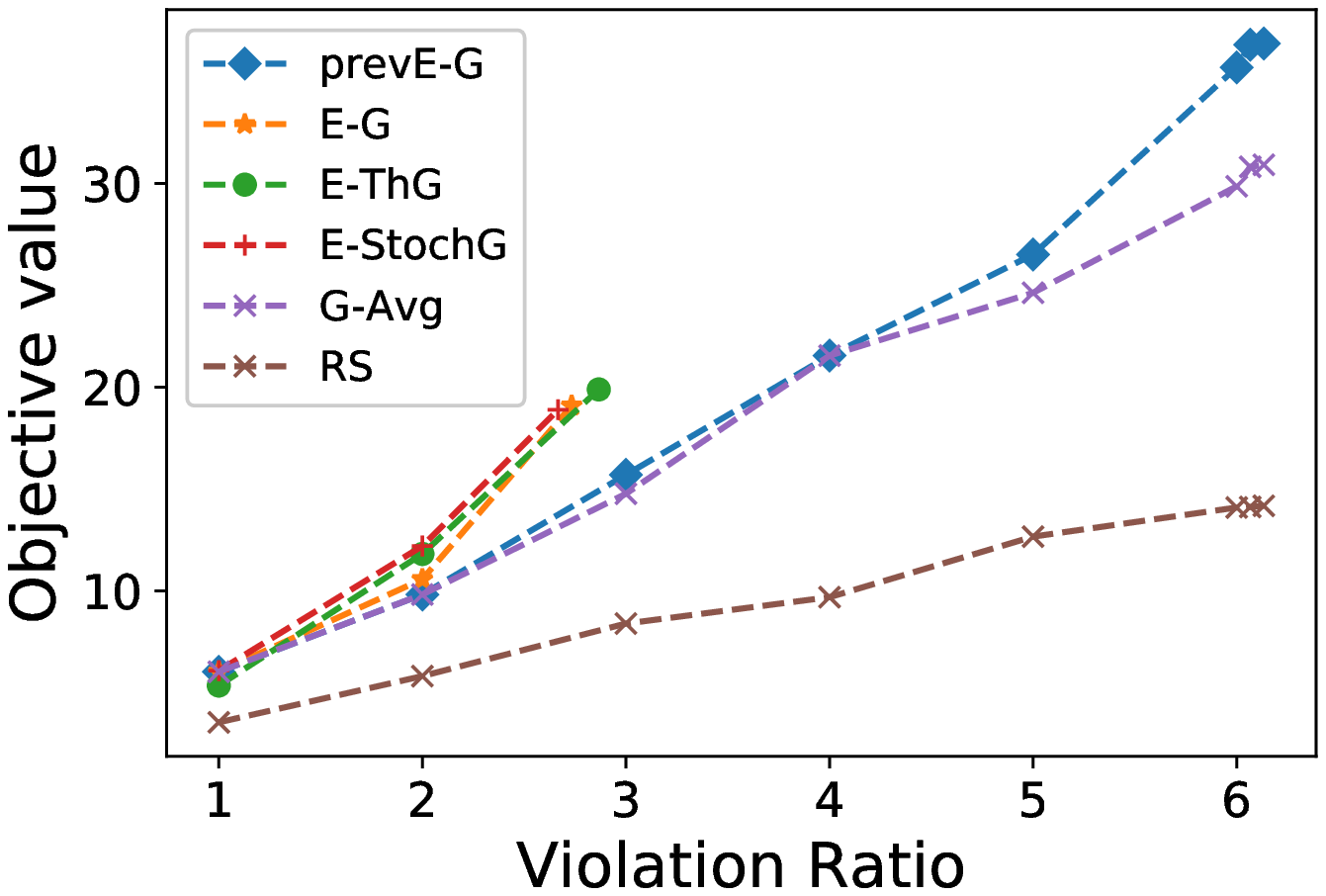} \\
 (a) & (b) & (c) \\
\includegraphics[width=0.3\textwidth]{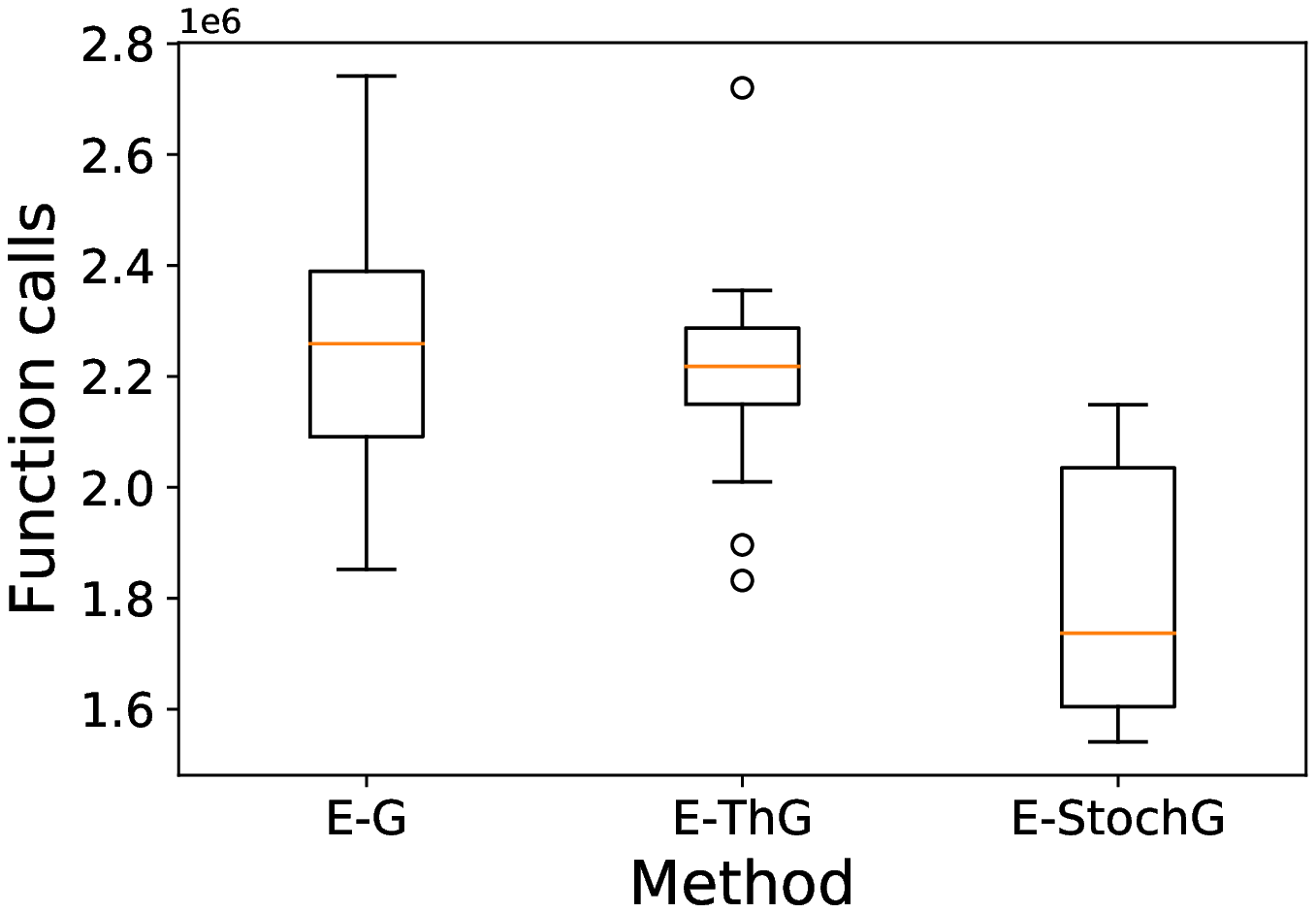} &
\includegraphics[width=0.3\textwidth]{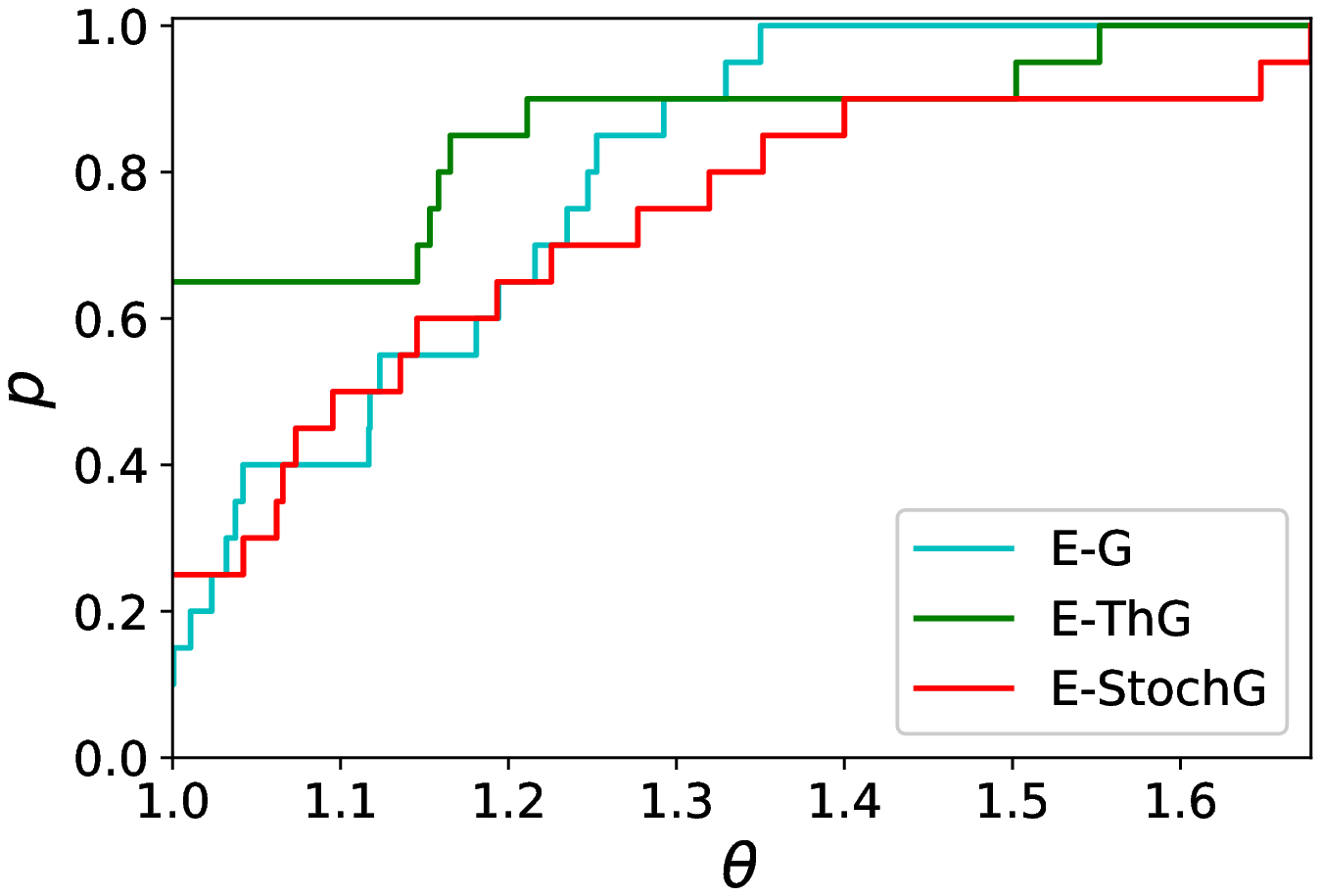}&
\includegraphics[width=0.3\textwidth]{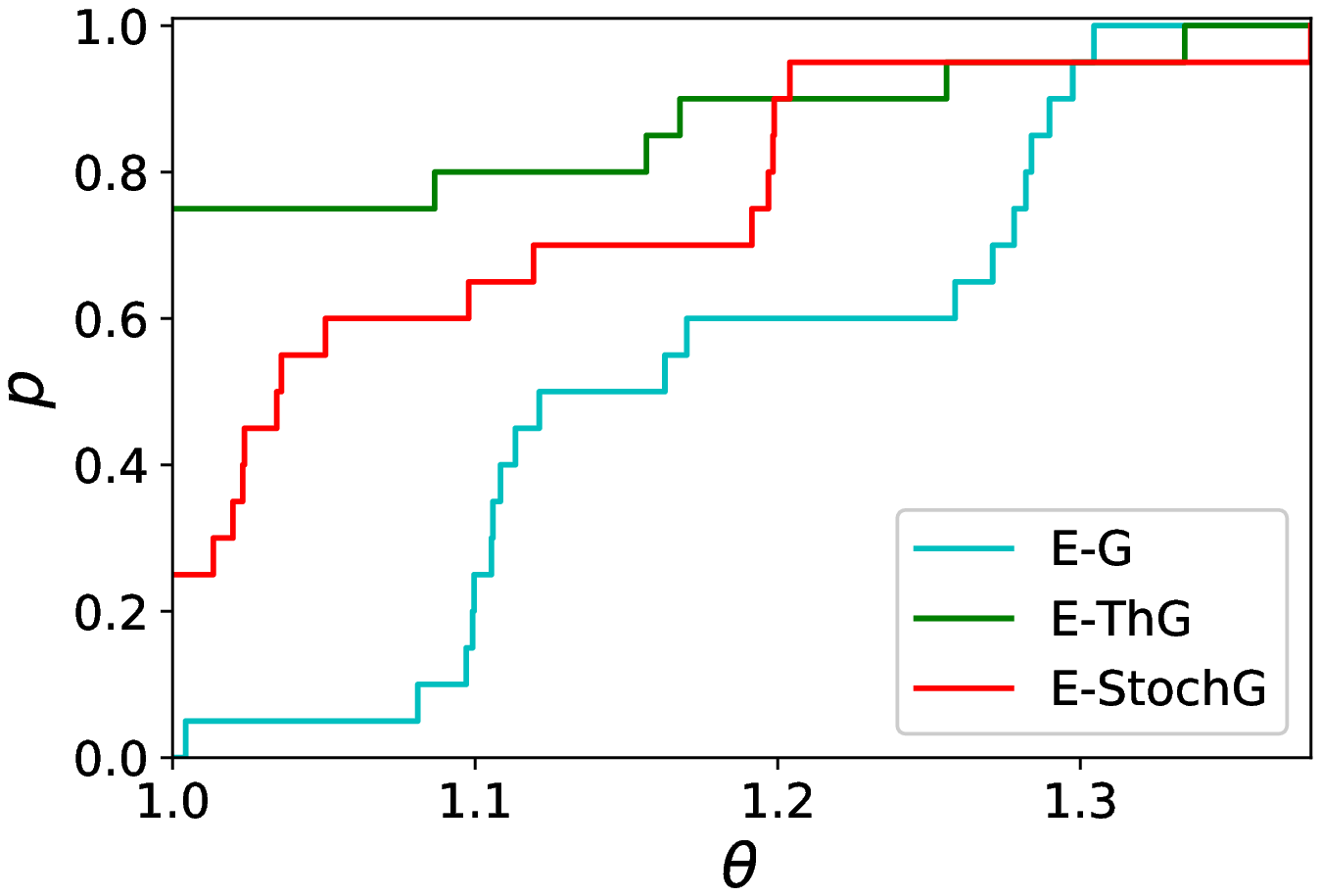}  \\
(d) & (e) & (f) \\
\includegraphics[width=0.3\textwidth]{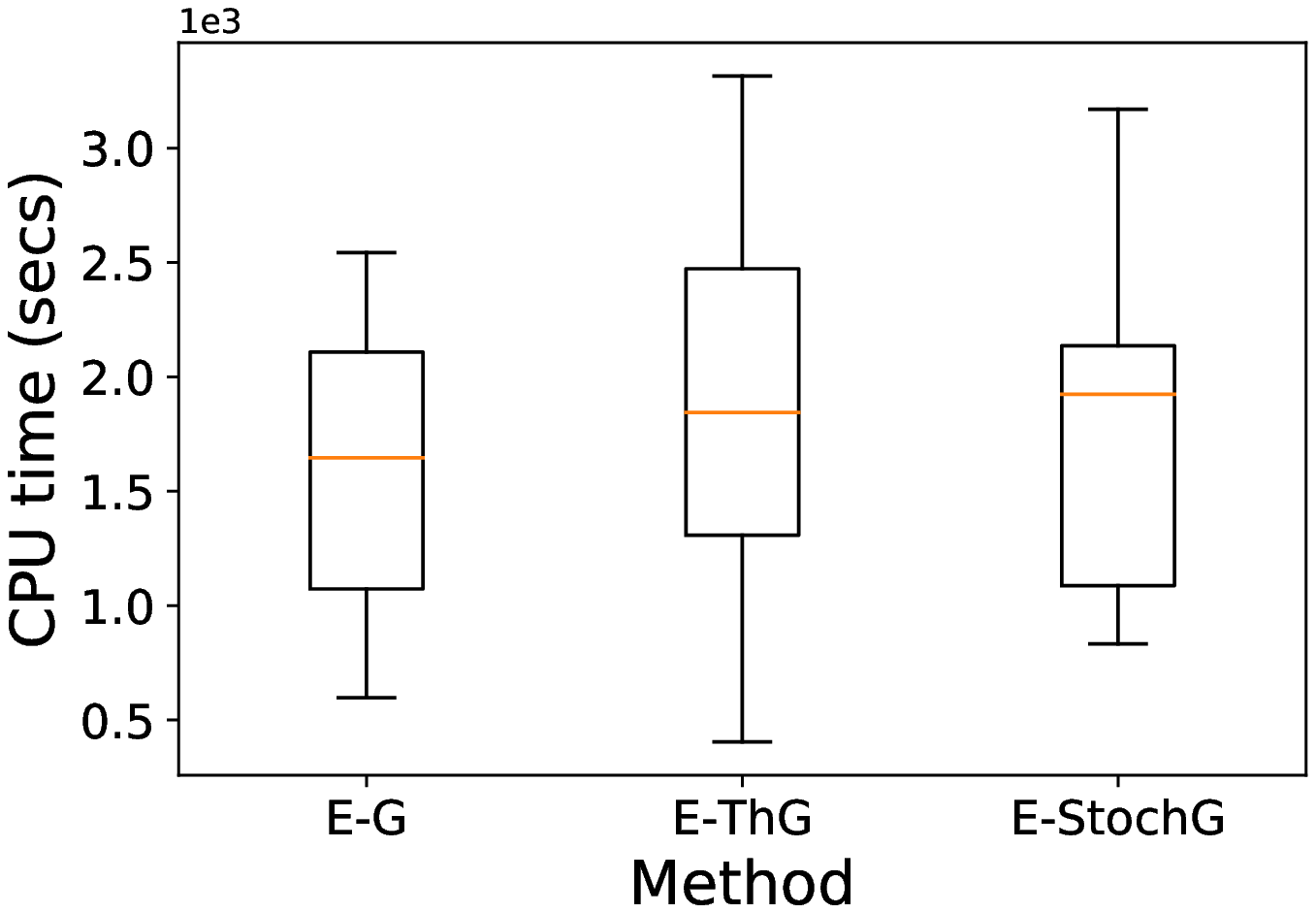} &
\includegraphics[width=0.3\textwidth]{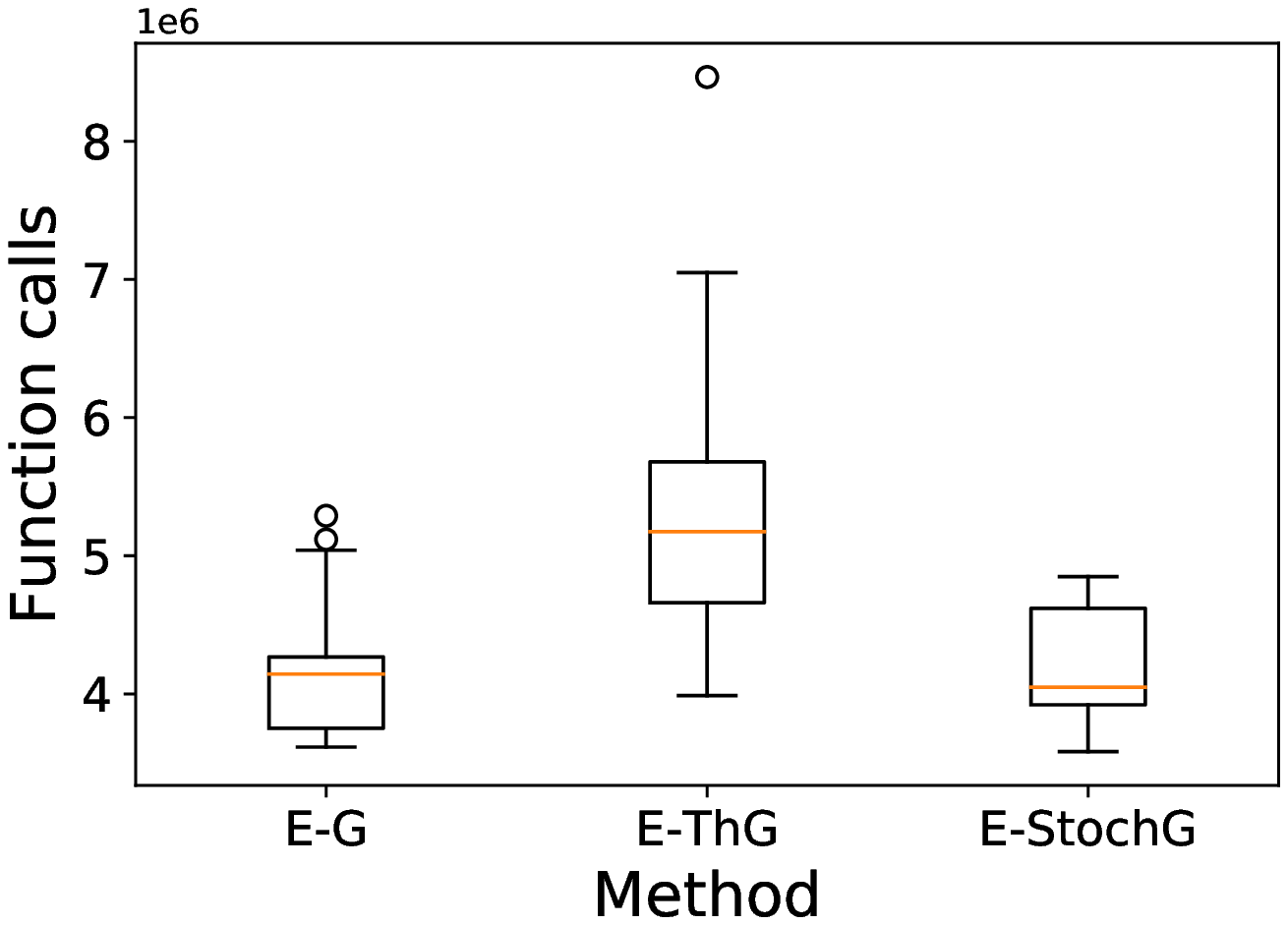} & \\
 (g) & (h) &
 \end{tabular}
\caption{{\it Non-parametric learning:} performance profiles (a) for running time (note that \texttt{E-G} is covered by \texttt{E-ThG}) and (b) for function calls. In (c) is the objective value versus the violation ratio in a single run of each method. In (d) is the box-plot for the function calls. {\it Clustering:} (\texttt{small}) performance profiles (e) for the running time and (f) for the function calls. (\texttt{Large}) box-plots (g) for the running time and (h) for the function calls.}
\label{fig:experiments}
\end{center}
\end{figure*}

\paragraph{Exemplar-based Clustering.}
We follow the setup in \cite{mirzasoleiman_etal15}. Solving the \(k\)-medoid problem is a common way to select a subset of exemplars that represent a large dataset $V$ \cite{kaufman_09}. This is done by minimizing the sum of pairwise dissimilarities between elements in $A\subseteq V$ and $V$. Formally, define \(L(A) =\frac{1}{V}\sum_{e\in V} \min_{v\in A}d(e,v)\), where \(d: V\times V\to\RR_+\) is a distance function that represents the dissimilarity between a pair of elements. By introducing an appropriate auxiliary element $e_0$, it is possible to define a new objective \(f(A) := L(\{e_0\}) - L(A+e_0)\) that is monotone and submodular \cite{gomes_krause10}, thus maximizing \(f\) is equivalent to minimizing $L$.
In our experiment, we use the VOC2012 dataset \cite{voc_2012}. The ground set $V$ corresponds to images, and we want to select a subset of the images that best represents the dataset. Each image has several (possible repeated) associated categories such as person, plane, etc. There are around 20 categories in total. Therefore, images are represented by feature vectors obtained by counting the number of elements that belong to each category, for example, if an image has 2 people and one plane, then its feature vector is $(2,1,0,\ldots,0)$ (where zeros correspond to other elements). We choose the Euclidean distance $d(e,e') = \|x_e-x_{e'}\|$ where $x_e,x_{e'}$ are the feature vectors for images $e,e'$. We normalize the feature vectors to mean zero and unit norm, and we choose $e_0$ as the origin. For our robust criteria, we consider $k=20$ perturbations of the function $f$ defined above, i.e., problem \eqref{eq:problem_def} corresponds to \(\max_{A\in \I}\min_{i\in[20]}f(A) + \sum_{e\in A\cap \Lambda_i}\eta_e\), where \(\Lambda_i \) is a random set of fixed size with different composition for each $i\in[20]$, and finally, $\eta\sim[0,1]^V$ is a uniform error vector.

We consider two experiments: (\texttt{small}) with $n=3,000$ images, 20 random instances considering $q=6$ and $b=70$, $|\Lambda_i|=500$ and (\texttt{large})  with $n=17,125$ images, 20 random instances $q\in\{10,\ldots,29\}$ parts and budget $b=5$, $|\Lambda_i|=3,000$ (we do not implement \texttt{prevE-G} because of the exorbitant running time). We report the results of the experiments in Figures \ref{fig:experiments} (e)-(h). For \texttt{small}, charts (e) and (f) confirm our theoretical results: \texttt{E-ThG} is the most likely to use less function calls (f) and running time (e) when the rank is relatively high (in this case $q\cdot b =420$) which contrasts with the performance of \texttt{E-G} that depends on the rank (chart (f) reflects this). For \texttt{large}, we can see in charts (g) and (h) that the results are similar, either in terms of running time or function evaluations, so when we face large ground sets, we could choose any algorithm, but we would still prefer \texttt{E-ThG} since it has no dependency on the rank.

\section*{Acknowledgements}
This research was partially supported by NSF Award CCF-NSF 1717947 and NSF CAREER award CMMI-1452463.
%

\bibliography{bibliography}

\appendix

\section{Appendix}\label{sec:proofs}

\subsection{Additional proofs}
For the problem of maximizing a single submodular function subject to cardinality constraints, a variation of the standard greedy algorithm called {\it threshold greedy} (Algorithm \ref{alg:threshold_greedy} with $\I$  being a cardinality constraint) is proposed by \cite{badanidiyuru_vondrak14}. 
This algorithm achieves a \((1-1/e-\delta)\)-approximation factor, where \(\delta\) is the parameter for decreasing the threshold (line 4 in Algorithm \ref{alg:threshold_greedy}). Moreover, it can be easily adapted to any matroid constraint (Algorithm \ref{alg:threshold_greedy}), and it achieves a \((\frac{1-\delta}{2-\delta})\)-approximation factor. We formally state this result in Corollary \ref{cor:threshold_greedy}.

\begin{algorithm}[tb]
\caption{\small Threshold-Greedy for General Matroid Constraints}\label{alg:threshold_greedy}
\begin{algorithmic}
\small
\Require $g:2^V\to\RR_+$ monotone submodular, matroid $\M=(V,\I)$ and $\delta>0$.
\Ensure a set $S\subseteq V$, such that $S\in\I$.
\State \(S\leftarrow\emptyset\)
\State $d\leftarrow\max_{e\in V} g(e)$
\For {$(w=d; w\geq \frac{\delta}{n}d; w\leftarrow (1-\delta)w)$}
\For {$e\in V\backslash S$}
	\If {$S+e \in \I$ and $g_S(e)\geq w$} 
	\State $S\leftarrow S + e$
	\EndIf
\EndFor
\EndFor
\end{algorithmic}
\end{algorithm}

\normalsize

\begin{corollary}\label{cor:threshold_greedy}
Given $\delta>0$, Algorithm \ref{alg:threshold_greedy} gives a $\left(\frac{1-\delta}{2-\delta} \right) $-approximation for the problem of maximizing a single nonnegative, monotone, submodular function $g$ subject to a matroid constraint, using $O(\frac{n}{\delta}\log\frac{n}{\delta})$ queries.
\end{corollary}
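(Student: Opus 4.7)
The plan is to separately analyze the query complexity (by loop counting) and the approximation ratio (by a matroid exchange argument that exploits the threshold-decay structure). Note that the corollary is essentially the $\ell=1$ case of Proposition \ref{prop:ext_threshold_greedy}, so one could simply invoke that result; still, a direct proof is instructive.

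For the query count, the outer loop drives $w$ from $d$ down to $\frac{\delta}{n}d$ geometrically with ratio $(1-\delta)$, hence runs for $O(\frac{1}{\delta}\log(n/\delta))$ iterations. Each iteration probes at most $n$ marginals $g_S(e)$ for $e\in V\setminus S$, giving the stated $O(\frac{n}{\delta}\log\frac{n}{\delta})$ oracle calls.

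For the approximation, let $S = \{s_1,\ldots,s_k\}$ be the algorithm's output listed in the order added, and let $O$ be any optimal independent set. After extending each to a base of the same size (the extension of $S$ can only append elements of marginal gain at most $\frac{\delta}{n(1-\delta)}d$ by the termination rule), a strengthened Brualdi exchange theorem yields a bijection $\phi:S\to O$ that is the identity on $S\cap O$ and satisfies $(S\setminus\{s\})\cup\{\phi(s)\}\in\I$ for every $s\in S$. Setting $o_i=\phi(s_i)$ and $S_i=\{s_1,\ldots,s_i\}$, heredity of $\I$ gives $S_{i-1}+o_i\in\I$ for every $i$. The central inequality is $g_{S_{i-1}}(o_i)\leq \frac{1}{1-\delta}\,g_{S_{i-1}}(s_i)$: if $o_i=s_i$ this is trivial; otherwise $o_i\in O\setminus S$ is never added to $S$, and letting $w_i$ denote the threshold at which $s_i$ was added, the state $S'$ of $S$ during the preceding round (threshold $w_i/(1-\delta)$) satisfies $S'\subseteq S_{i-1}$ and $S'+o_i\in\I$, so $o_i$ could not have been rejected for infeasibility and must have been rejected for low marginal, yielding $g_{S'}(o_i)<w_i/(1-\delta)$. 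Submodularity then gives $g_{S_{i-1}}(o_i)\leq g_{S'}(o_i)<w_i/(1-\delta)\leq g_{S_{i-1}}(s_i)/(1-\delta)$; the edge case $w_i=d$ is handled via $g_{S_{i-1}}(o_i)\leq g(o_i)\leq d=w_i$. Summing over $i$ and using the submodularity-derived bound $g(O\cup S)-g(S)\leq \sum_i g_{S_{i-1}}(o_i)$ (obtained by inserting the $o_i$ into $S$ one by one and bounding each marginal via submodularity) gives
\[ g(O) \leq g(O\cup S) \leq g(S) + \frac{1}{1-\delta}\,g(S) = \frac{2-\delta}{1-\delta}\,g(S), \]
which rearranges to the claimed $(1-\delta)/(2-\delta)$ bound.

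I expect the main obstacle to be the combined matroid/threshold step: arranging the exchange bijection $\phi$ to align with the algorithm's order of additions while remaining the identity on $S\cap O$, and then tracking the state of $S$ between consecutive threshold rounds to justify the rejection-by-low-marginal deduction. A secondary technicality is the base-completion of $S$ when the algorithm does not output a base; each such extension element has $g_S(e)<\frac{\delta}{n(1-\delta)}d\leq \frac{\delta}{n(1-\delta)}g(O)$, so their cumulative effect contributes only $O(\delta)\cdot g(O)$, a lower-order term that does not affect the stated constant.
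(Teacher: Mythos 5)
Your proof is correct and follows essentially the same route as the paper's: a Brualdi-type exchange bijection between the output and an optimal base, the key inequality $g_{S_{i-1}}(o_i)\leq g_{S_{i-1}}(s_i)/(1-\delta)$ derived from the threshold-decay rejection rule, and a telescoping sum via submodularity. You are in fact somewhat more careful than the paper, which glosses over the justification of the rejection step, the first-threshold edge case, and the base-completion of $S$ (handled there only by a ``w.l.o.g.'' that, as you note, strictly costs an additive $O(\delta)$ in the constant).
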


\begin{proof}
Denote by $r$ the rank of matroid $\M$. Let $S^*=\{e_1^*,\ldots, e_r^*\}$ and  $S=\{e_1,\ldots,e_r\}$ be the optimal set and the set obtained with the algorithm, respectively. W.l.o.g, we can assume that $S^*$ and $S_G$ are both basis in $\M$, so there exists a bijection $\phi$ such that $\phi(e_i) = e_i^*$ for all $i\in[r]$. Denote by $S_{i-1}=\{e_1,\ldots,e_{i-1}\}$ the set of elements after iteration $i-1$. Observe that if  $e_i$ is the next element chosen by the algorithm and the current threshold value is $w$, then we get the inequalities
\[g_{S_{i-1}}(x) = \left\{\begin{matrix} \geq  w & \text{if} \ x=e_i \\ \leq w/(1-\delta) & \text{if} \ x\in V \ \text{s.t.}  \ S_{i-1}+x \in \I\end{matrix}\right.\]
This imples that $g_{S_{i-1}}(e_i)\geq (1-\delta)f_{S_{i-1}}(x)$ for all $x\in V\backslash S_{i-1}$ such that $S_{i-1} + x\in \I$. In particular for $x = e_i^*$ we have then
\[(1-\delta)g_{S_{i-1}}(e_i^*)\leq g(S_i) - g(S_{i-1}).\]
 On the other hand, if we apply submodularity twice we get
\[g(S^*)-g(S)\leq \sum_{i=1}^Kg_{S}(e_i^*)\leq \sum_{i=1}^Kg_{S_{i-1}}(e_i^*)\]
Using the previous two inequalities we get
\[(1-\delta)[g(S^*)-g(S)] \leq  \sum_{i=1}^K  g(S_i) - g(S_{i-1}) = g(S).\]
So we finally obtained
\[g(S)\geq \left(\frac{1-\delta}{2-\delta} \right) \cdot g(S^*)\]
as claimed.
\end{proof}

Following the same idea as in \cite{anari2017robust}, we then reuse the threshold greedy algorithm in an iterative scheme to construct a family of feasible sets whose union has an (nearly) optimal objective value, 
as stated in Proposition \ref{prop:ext_threshold_greedy}.

\begin{proof}[Proof of Proposition \ref{prop:ext_threshold_greedy}]
We use Corollary \ref{cor:threshold_greedy} to state that the extended threshold greedy algorithm when run for a single iteration returns a set $S_1\in \I$ such that
\[g(S_1)-g(\emptyset)\geq \left(1-\left(\frac{1}{2-\delta}\right)\right) \max_{S\in \I}  \left\{g(S)-g(\emptyset)\right\}.\]
We use the above statement to prove our theorem by induction. For $\tau=1$, the claim follows directly. Consider any $\ell\geq 2$. Observe that the algorithm in iteration $\tau=\ell$, is exactly the Threshold-Greedy algorithm run on submodular function $g':2^{V} \rightarrow \RR_+$ where $g'(S):= g(S\bigcup \cup_{\tau=1}^{\ell-1} S_\tau)$. This procedure returns $S_\ell$ such that
\[g'(S_\ell)- g'(\emptyset) \geq \left(1-\frac{1}{2-\delta}\right) \max_{S\in \I} \left(g'(S)-g'(\emptyset)\right)\]
which implies that
\begin{align*}
g\left(\cup_{\tau=1}^\ell S_\tau\right)&-g\left(\cup_{\tau=1}^{\ell-1}S_\tau\right)\geq \left(1-\frac{1}{2-\delta}\right) \left(\max_{S\in \I}g(S)-  g\left(\cup_{\tau=1}^{\ell-1}S_\tau\right)\right).
\end{align*}
By induction we know \(g\left(\cup_{\tau=1}^{\ell-1}S_\tau\right)\geq \left(1-\left(\frac{1}{2-\delta}\right)^{\ell-1}\right)\max_{S\in \I} g(S).\) Thus we obtain
\begin{eqnarray*}
 g\left(\cup_{\tau=1}^\ell S_\tau\right)&\geq& \left(1-\frac{1}{2-\delta}\right) \max_{S\in \I}g(S) +\left(\frac{1}{2-\delta}\right) g\left(\cup_{\tau=1}^{\ell-1}S_\tau\right)\\
 &\geq& \left(1-\left(\frac{1}{2-\delta}\right)^{\ell}\right)\max_{S\in \I} g(S)
\end{eqnarray*}
as claimed.
\end{proof}

 Note that the number of function calls \(O( \frac{n\ell}{\delta}\log\frac{n}{\delta})\) does not depend on the rank of the matroid $r$ as the algorithm proposed in \cite{anari2017robust}, which requires \(O(n\ell r)\) function evaluations. 

\subsection{Proof of Theorem \ref{theorem:offline}}\label{sec:bi-criteria_alg}
The formal proof of Theorem \ref{theorem:offline} follows the same lines as in \cite{anari2017robust}.

\begin{proof}[Proof of Theorem \ref{theorem:offline}]
Consider the family of monotone submodular functions \(\{f_i\}_{i \in [k]}\), $g^\gamma$ defined as above and parameter $\gamma$ with relative error of \(1-\frac{\epsilon}{2}\) . If we run the extended threshold greedy algorithm \ref{alg:ext_th_greedy} on $g^\gamma$ with $\ell\geq \lceil \log \frac{2k}{\epsilon}/\log(2-\delta) \rceil$, we get a set $S^{\alg}= S_1 \cup\cdots \cup S_\ell$, where $S_j\in\I$ for all $j\in[\ell]$. Moreover, Proposition~\ref{prop:ext_threshold_greedy} implies that
\(g^\gamma(S^{\alg})\geq \left(1-\frac{\epsilon}{2k}\right)\gamma.\)
Then, by a contradiction argument we can prove that
$f_i(S^{\alg})\geq \left(1-\frac{\epsilon}{2}\right)\gamma \geq (1-\epsilon)\OPT$,
for all $i\in[k]$ as claimed.
\end{proof}

\subsection{Other implementation improvements}\label{sec:other_improv}

\paragraph{Lazy evaluations} All algorithms and baselines are implemented with lazy evaluations \cite{minoux_78}. This means, we keep a list of an upper bound $\rho(e)$ on the marginal gain for each element (initially $\infty$) in decreasing order, and at each iteration, it evaluates the element at the top of the list $e'$. If the marginal gain of this element satisfies $g_S(e')\geq \rho(e)$ for all $e\neq e'$, then submodularity ensures $g_S(e')\geq g_S(e)$. In this way, greedy does not have to evaluate all marginal values to select the best element.

\paragraph{Bounds initialization.} To compute the initial $\lb$ and $\ub$ for the binary search, we run the lazy greedy \cite{minoux_78} for each function in a small sub-collection $\{f_i\}_{i\in[k']}$, where $k'\ll k$, leading to $k'$ solutions $A^1,\ldots, A^{k'}$ with guarantees $f_i(A^i)\geq (1/2)\cdot \max_{S\in \I}f_i(S)$. Therefore, we set $\ub = 2\cdot \min_{i\in[k']}f_i(A^i)$  and $\lb = \max_{j\in[k']}\min_{i\in[k]} f_i(A^j)$. This two values correspond to upper and lower bounds for the true optimum $\OPT$.

\begin{algorithm}[H]
\caption{\small Extended Stochastic-Greedy for Partition Matroid}\label{alg:ext_partition_stochastic_greedy}
\begin{algorithmic}[1]
\small
\Require $\ell\geq 1$, monotone submodular function $g:2^{V} \rightarrow \RR_+$, partition matroid $\M=(V,\I)$, $\epsilon'>0$.
\Ensure sets $S_1,\ldots, S_\ell\in \I$.
\For {$\tau=1, \dots, \ell$}
\State $S_\tau\leftarrow\emptyset$
\While {$S_\tau$ is not basis in $\M$}
\State For each $j\in[q]$, uniformly sample $R_j\sim P_j\backslash S_\tau$ with $\frac{n_j}{k_j}\log \frac{1}{\epsilon'}$ elements.
\State $e^*\leftarrow \argmax_{e\in R_1\cup\cdots R_r} \left\{g_{\cup_{j=1}^{\tau} S_j}( e)\right\}$.
\State $S_\tau\leftarrow S_\tau + e^*$.
\EndWhile
\EndFor
\end{algorithmic}
\end{algorithm}

\normalsize
\subsection{Extended Stochastic Greedy for Partition Matroid}\label{sec:E-StochG}
Consider a partition \(\{P_1,\ldots,P_q\}\) on ground set $V$ with $n_j := |P_j|$ for all $j\in[q]$ and a family of feasible sets $\I = \{S\subseteq V: \ |S\cap P_j|\leq k_j \ \forall j\in[q]\}$ which for a matroid $\M=(V,\I)$. We can construct a heuristic based on the stochastic greedy algorithm \cite{mirzasoleiman_etal15} and adapted to partition constraints (Algorithm \ref{alg:ext_partition_stochastic_greedy}): given $\epsilon'>0$, in each round it uniformly samples $\frac{n_j}{b}\log \frac{1}{\epsilon'}$ elements from each part $R_j\sim P_j$, where $n_j:=|P_j|$. And then, it obtains the element with the largest marginal value among elements in $\cup_{j\in[q]}R_j$. 

Even though, we are not able to state any provable guarantee, we use Algorithm \ref{alg:ext_partition_stochastic_greedy} as inner loop for solving the robust problem \eqref{eq:problem_def} with $\ell = \lceil \log \frac{2k}{\epsilon}\rceil$.

\subsection{Pseudo bi-criteria algorithm}\label{sec:final_pseudo}
In this section, we present the pseudo-code of the main algorithm that we use for the experiments in Section \ref{sec:comp-results}. Algorithm \ref{alg:final_robust} works as follows: in a outer loop we obtain an estimate $\gamma$ on the value of the optimal solution $\OPT$ via a binary search. For each guess $\gamma$ we define a set function \(g^\gamma(S):= \frac{1}{k} \sum_{i\in[k]} \min\{f_i(S), \gamma\}\). Then, we run algorithm $\mathcal{A}$ (either \texttt{E-G}, \texttt{E-ThG}, or \texttt{E-StochG}) on $g^\gamma$. If at some point the solution $S$ satisfies $\min_{i\in[k]}f_i(S)\geq (1-\epsilon/2)\gamma$, we stop and update the lower $\lb = \min_{i\in[k]}f_i(S)$, since we find a good candidate. Otherwise, we continue. After finishing round $\tau$, we check if we realize the guarantee $g^\gamma(\cup_{j=1}^\tau S_j) \geq \alpha_\tau\cdot\gamma$. If not, then we stop and update the upper bound $\ub = \gamma$, otherwise we continue. Finally, we stop the binary search whenever $\lb$ and $\ub$ are sufficiently close. We consider factor guarantees (line 17 in Algorithm \ref{alg:final_robust}) $\alpha_\tau = 1-1/2^\tau$ for \texttt{E-G} and or \texttt{E-StochG}, and $\alpha_\tau = 1 - 1/(2-\delta)^\tau$ for \texttt{E-ThG}. 

\begin{algorithm}[h]
\caption{\small Pseudo-code to get bi-criteria solutions}\label{alg:final_robust}
\begin{algorithmic}[1]
\small
\Require $\epsilon>0$, monotone submodular functions $\{f_i\}_{i\in[k]}$, partition matroid $P_1,\ldots,P_q$, and subroutine $\mathcal{A}$: \texttt{E-G}, \texttt{E-ThG}, or \texttt{E-StochG}.
\Ensure sets $S_1,\ldots, S_\ell\in \I$.
\State Compute $\lb$ and $\ub$ as stated above.
\While {$\frac{\ub-\lb}{\ub} > 2\epsilon$}
\State $\gamma = (\ub+\lb)/2$
\For {$\tau = 1,\ldots, \ell$}
\State $S_\tau =\emptyset$.
\State Compute marginals $\rho(e) = g^\gamma(S+e) - g^{\gamma}(S)$ for all $e\in V$.
\If {$\max_e\rho(e) \leq 0$}

\If {$\min_if_i(\cup_{j=1}^\tau S_j)\geq (1-\epsilon)\gamma$}
\State {\bf Update} $\lb = \min_if_i(\cup_{j=1}^\tau S_j)$
\Else
\State {\bf Update} $\ub = \gamma$
\EndIf
\State {\bf Break}

\Else
\State {\bf Obtain} $S_\tau \leftarrow \mathcal{A}(g^\gamma,\cup_{j=1}^{\tau-1}S_j)$

\State 
\If {$g^\gamma(\cup_{j=1}^\tau S_j) < \alpha_\tau\cdot\gamma$}
\State {\bf Update} $\ub = \gamma$.
\State {\bf Break}
\Else
\If {$\min_if_i(\cup_{j=1}^\tau S_j)\geq (1-\epsilon)\gamma$}
\State {\bf Update} $\lb = \min_if_i(\cup_{j=1}^\tau S_j)$
\State {\bf Break}
\Else
\State {\bf Continue}
\EndIf
\EndIf
\EndIf

\EndFor
\EndWhile

\end{algorithmic}
\end{algorithm}
\normalsize

\subsection{Extra experiment - Sensor Placement}\label{sec:extra_experiment}

For this problem we follow the setup in \cite{krause2008robust}. Here, we are given a set of sensors $V$ with fixed locations in a specific region. Each sensor $s$ measures certain phenomena such as temperature, humidity and light, which define a random vector $X_s$. We assume that the set of random variables $X_V$ is distributed according to a multivariate normal distribution, which corresponds to a Gaussian Process (GP). The predictive variance of sensor $s$ after obtaining observations from a subset of sensors $A\subseteq V$ is given by \(\sigma^2_{s|A} = \sigma_s^2 - \Sigma_{sA}\Sigma_{AA}^{-1}\Sigma_{As}\), where $\Sigma_{AA}$ is the covariance matrix of the measurements at the chosen locations $A$, $\Sigma_{sA}$ is the row-vector in $\Sigma$ with row $s$ and columns $A$, and $\sigma_s^2$ is the \emph{a priori} variance of sensor $s$. Traditionally, the goal is to find a subset $A$ that minimizes the predictive variance. However, let us assume that the \emph{a priori} variance $\sigma^2_s$ is constant for all locations $s$ and define the variance reduction $f_s(A) := \Sigma_{sA}\Sigma_{AA}^{-1}\Sigma_{As}$. \cite{das2008algorithms} show that $f_s$ is monotone and submodular for certain distributions. Therefore, minimizing \(\sigma^2_{s|A}\) is equivalent to maximizing $f_s$ when $\sigma_s^2$ is assumed to be constant.

We use the Intel Research Berkeley dataset of $n = 44$ sensors, which contains measurements of temperature (T), humidity (L), and light (L). We consider data of three consecutive days, and we construct the corresponding covariance matrices $\Sigma^T$, $\Sigma^H$, and $\Sigma^L$. 
For our robust criteria, we consider perturbed versions of the average variance reduction for each observation $k=3$, i.e., problem \eqref{eq:problem_def} corresponds to $\max_{A\in \I}\min_{k \in \{T,H,L\}} \{f_k(A)+\sum_{e\in A\cap \Lambda_k}\eta_e\}$, where $f_k(A) = \frac{1}{44}\sum_{s\in[44]}\Sigma^k_{sA}(\Sigma^k_{AA})^{-1}\Sigma^k_{As}$, $\Lambda_k$ is random set of size 15, different in composition for each $k$, and $\eta\sim[0,1]^V$ is an error vector. 

We made 30 random runs considering the number of parts $q = 3$ and budget $b=1$. We report the results of the instances in Figures \ref{fig:experiments_sensors}. We observe that the three tested algorithms clearly outperform \texttt{prevE-G}, either in running time (see box-plot (b) and performance profile (d)) and in the number of function evaluations (see box-plot (a) and performance profile (e)). When we only compared the three tested algorithm, the performance is very similar (see box-plots (a) and (b)), but \texttt{E-StochG} is the most likely to use less number of function calls, see profile (e). In terms of running time \texttt{E-G} and \texttt{E-Stoch} are the most likely to solve the problem faster, see profile (d). Finally, in chart (c) we observe that the stopping rules help to find a good candidate solution earlier by using less elements and at much less cost.

\begin{figure*}[h!]
\begin{center}
\begin{tabular}{ccc}

 \includegraphics[width=0.3\textwidth]{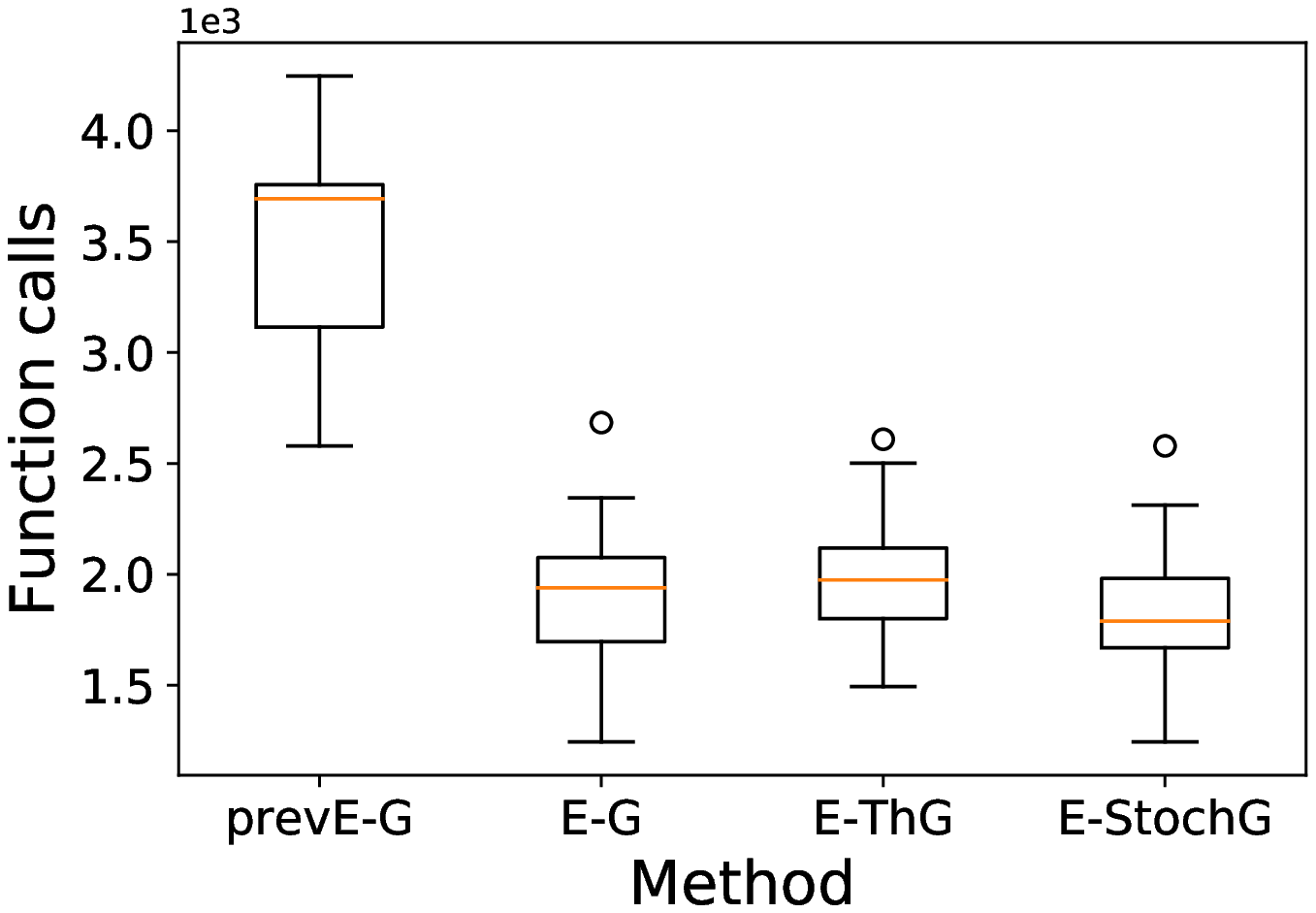} &
\includegraphics[width=0.3\textwidth]{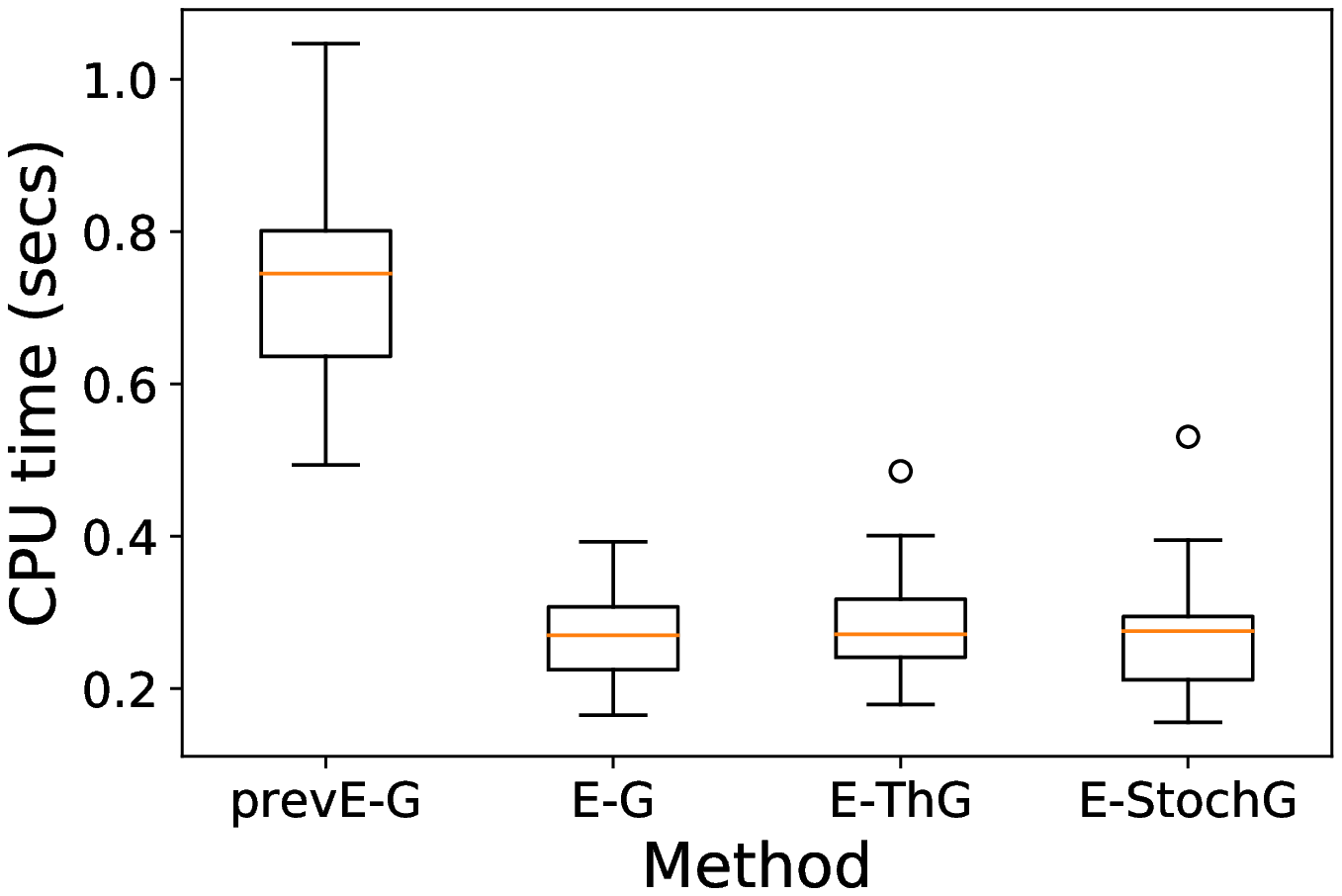} &
\includegraphics[width=0.3\textwidth]{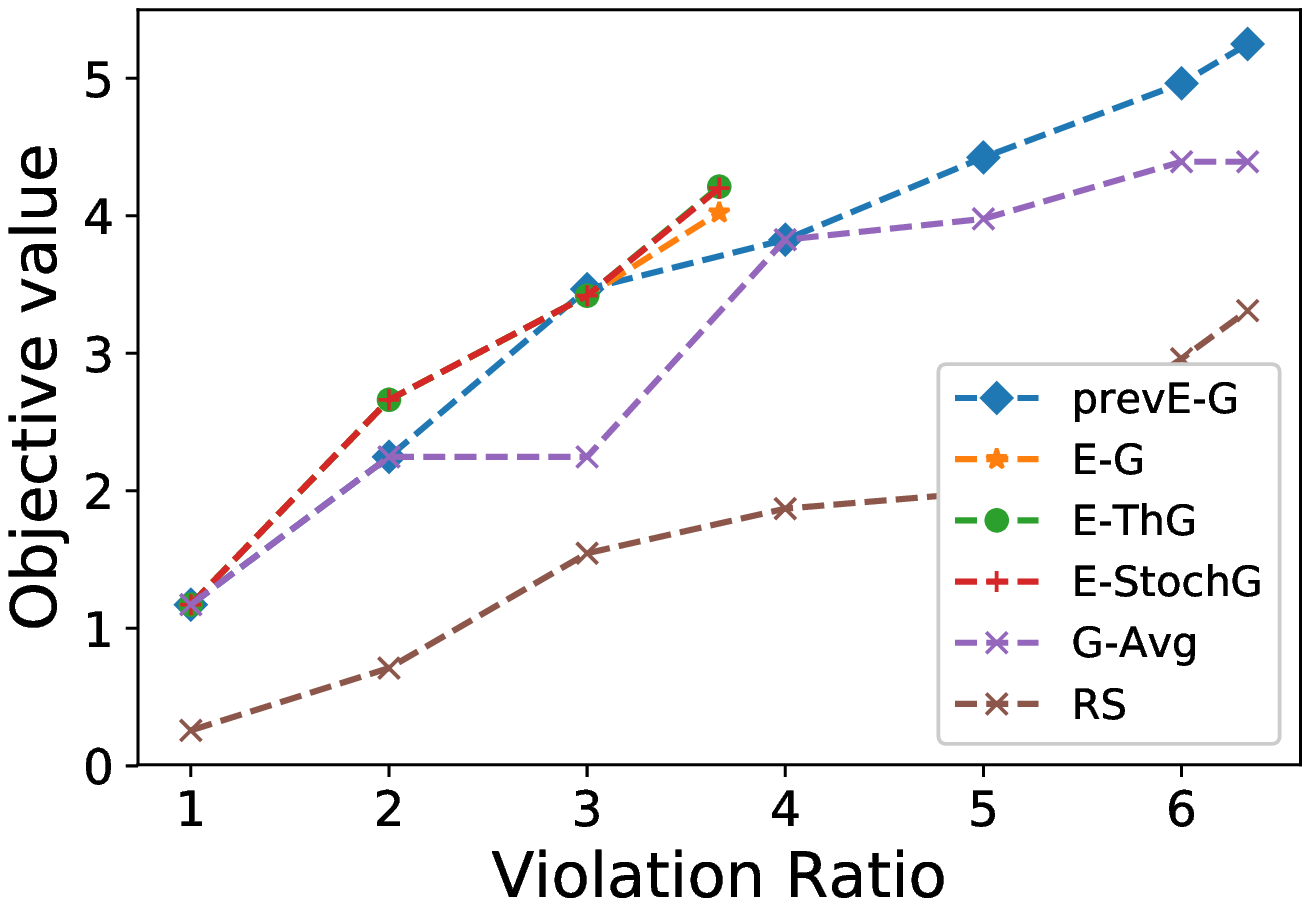} \\
 (a) & (b) & (c) \\
\includegraphics[width=0.3\textwidth]{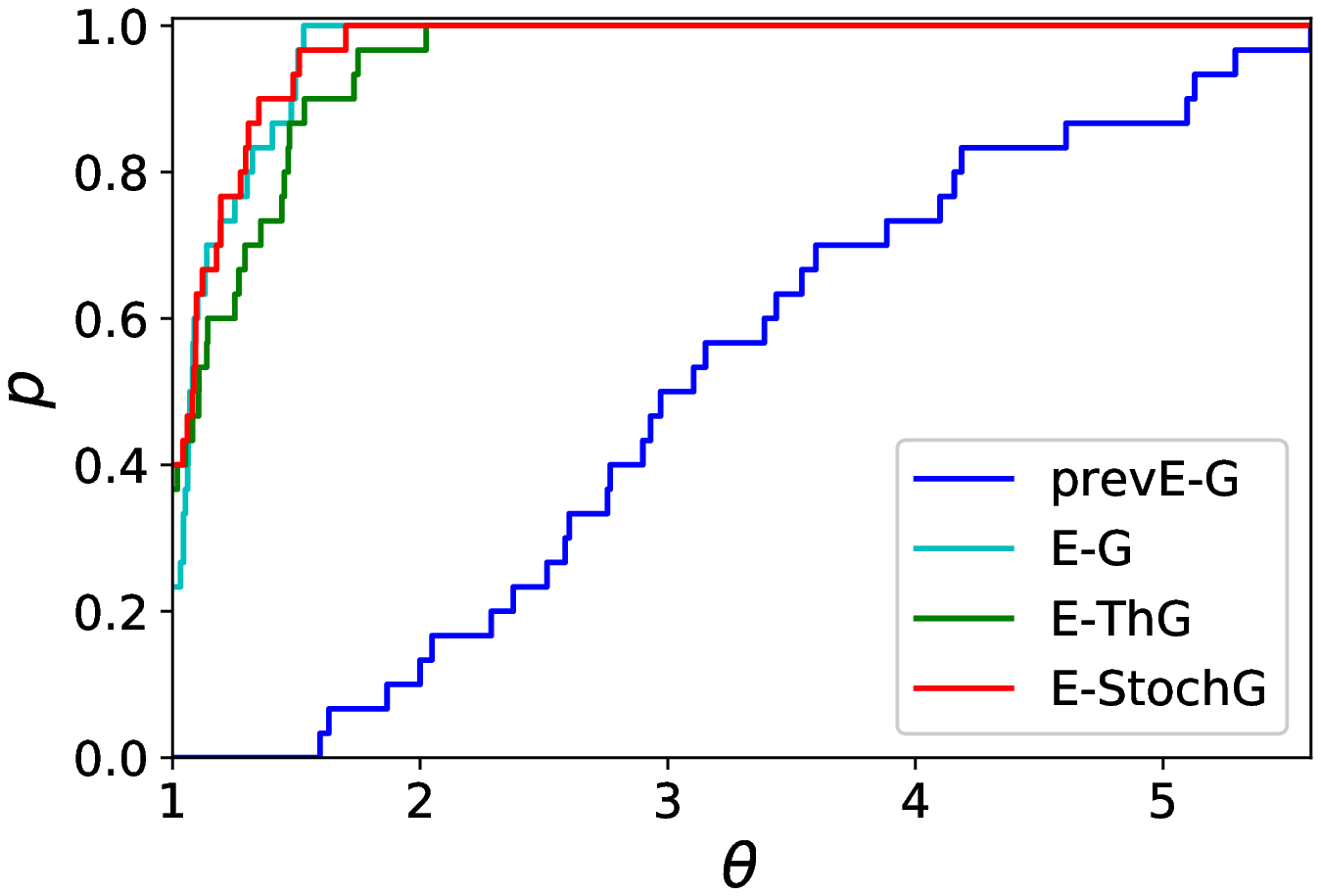} &
\includegraphics[width=0.3\textwidth]{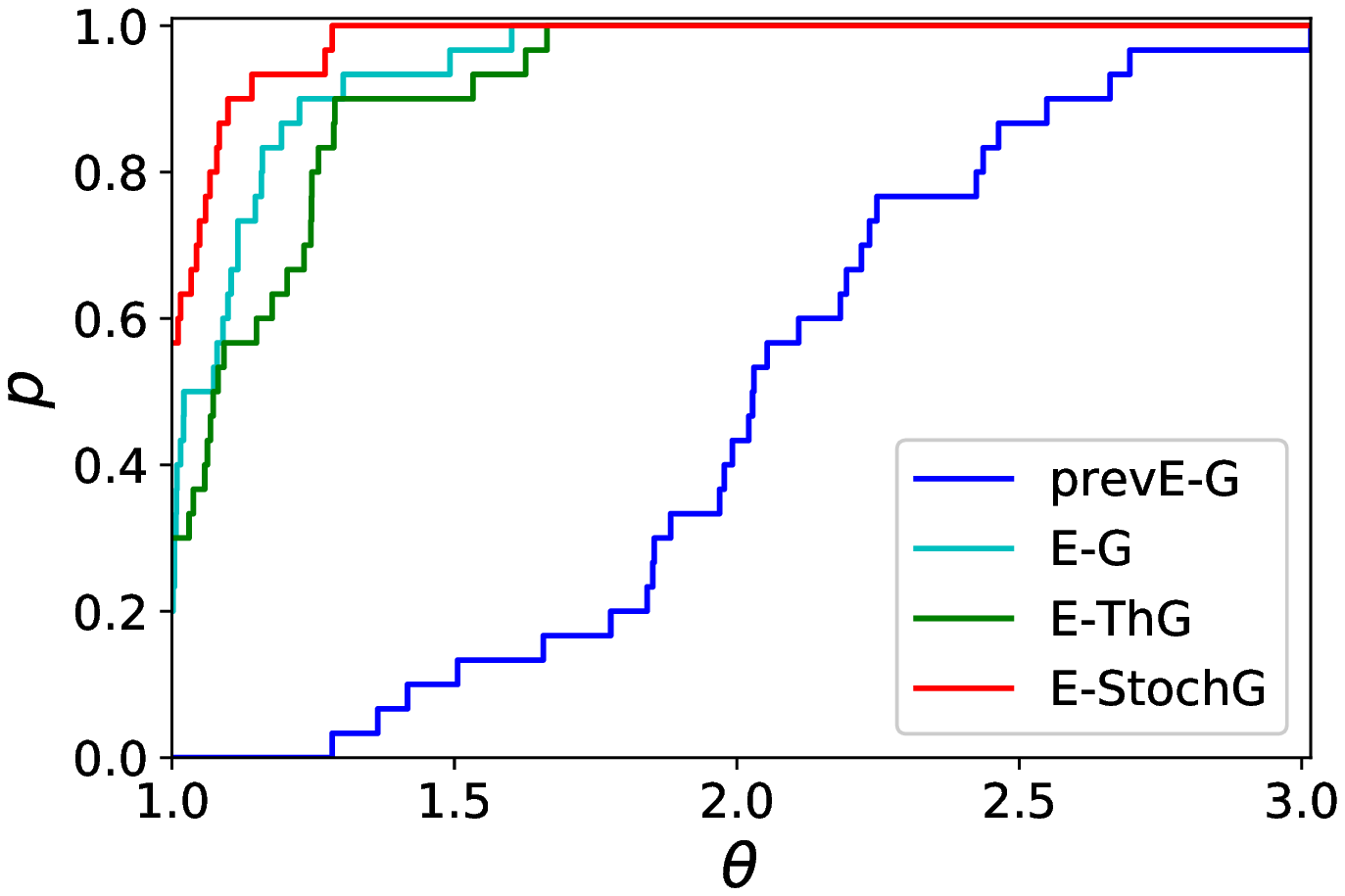} & \\
(d) & (e) & 
 \end{tabular}
 \caption{{\it Sensor Placement}: Box-plots (a) for the function calls and (b) for the running time. In (c) we present the objective value versus the violation ratio in a single run of each method. Performance profiles (d) for the running time and (e) for the function calls.}
\label{fig:experiments_sensors}
\end{center}
\end{figure*}

\end{document}